\definecolor {processblue}{cmyk}{0.96,0,0,0}
\newtheorem{theorem}{Theorem}[section]
\newtheorem{lemma}[theorem]{Lemma}
\newtheorem{definition}[theorem]{Definition}
\newtheorem{problem}[theorem]{Problem}
\theoremstyle{definition}
{
	\newtheorem{remark}[theorem]{Remark}
	
}
\mathchardef\mhyphen="2D
\def\({\left(}
\def\){\right)}
\definecolor{greenn}{rgb}{0,0.8,0.2}
\definecolor{bluue}{rgb}{0.3,0,0.7}
\mathchardef\mhyphen="2D
\title{Exponential speedup of quantum algorithms for the pathfinding problem}
\author{ Jianqiang Li\thanks{Department of Computer Science and Engineering, Pennsylvania State University, PA, USA {\tt jxl1842@psu.edu}}}
\date{ }
\begin{document}

\maketitle

\begin{abstract}
Given $x, y$ on an unweighted undirected graph $G$, the goal of the pathfinding problem is to find an $x$-$y$ path. In this work, we first construct a graph $G$ based on welded trees and define a pathfinding problem in the adjacency list oracle $O$. Then we provide an efficient quantum algorithm to find an $x$-$y$ path in the graph $G$. Finally, we prove that no classical algorithm can find an $x$-$y$ path in subexponential time with high probability. The pathfinding problem is one of the fundamental graph-related problems. Our findings suggest that quantum algorithms could potentially offer advantages in more types of graphs to solve the pathfinding problem.
\end{abstract}

\section{Introduction}

Quantum algorithms can exploit the superposition and interference properties of quantum mechanics to solve certain problems significantly faster than classical algorithms, achieving super-polynomial speedup. However, the list of such problems is limited. The most well-known problems demonstrating this advantage include Simon's problem, factoring, discrete logarithms, the Pell equation, and other related algebraic problems. Efficient quantum algorithms \cite{simon1997SimonsAlgorithm, shor1994Factoring, hallgren2007polynomial} exist for these problems, while no known classical algorithms can solve them. In addition, there are graph-related problems, such as the welded tree problem \cite{childs2003ExpSpeedupQW} and the graph property testing problem \cite{benDavid2020SymmetriesGraphPropertiesQSpeedups}, where exponential separations can be achieved relative to the adjacency list oracle.

Identifying problems that can be leveraged by quantum algorithms to achieve superpolynomial speedup over classical algorithms remains one of the major challenges in the field of quantum computation. These challenges arise from two aspects. First, the problem needs to be believed to be classically hard. Second, the problem also needs to be quantumly easy. Several previous attempts to identify such problems have failed due to these constraints. For example, hopes of achieving exponential speedup with quantum algorithms in various machine learning problems, including recommendation systems, principal component analysis, supervised clustering, support vector machines, low-rank regression, and solving semidefinite programs, were later refuted by Tang's breakthrough results \cite{tang2018QuantumInspiredRecommSys} and subsequent work \cite{tang2018QInspiredClassAlgPCA,chia2019SampdSubLinLowRankFramework}. Moreover, although quantum algorithms \cite{chen2022quantum, ding2023limitations} that utilize the HHL algorithm \cite{HHL09} as a subroutine show potential to achieve superpolynomial speed-up in solving certain multivariate polynomial systems, it remains unknown how to find such special polynomial systems to demonstrate this quantum speedup. 


The welded tree pathfinding problem in the adjacency list oracle is one of the top open problems in the field of quantum query complexity \cite{aaronson2021open}. A welded tree graph consists of two balanced binary trees of height $n$ with roots $s$ and $t$ and a random cycle that alternates between the leaves of the two binary trees. Given the name of the two roots $s$ and $t$, the goal of the welded tree pathfinding problem is to output the names of vertices of an $s$-$t$ path.  While there is an efficient quantum algorithm to solve the welded tree problem, that is, finding the name of the root $t$ given the name of the root $s$, it has been shown that a natural class of quantum algorithms cannot solve the welded tree pathfinding problem \cite{childs2022quantum}. 


Moreover, the pathfinding problem in isogeny graphs plays an important role in the security of isogeny-based cryptosystems \cite{eisentrager2020computing, Charles2009, Costache2018,wesolowski2022supersingular}. An isogeny graph is constructed with vertices as isomorphism classes of elliptic curves and edges as isogenies (maps) between two elliptic curves. The size of the graph is exponentially large, making it difficult to find a path (map) between two vertices (elliptic curves) of polynomial length.
Depending on the constraints on the isogenies, there will be different types of graph, such as volcano graphs, Cayley graphs, and supersingular graphs. So far, the best quantum algorithm takes exponential time to solve the pathfinding problem in supersingular isogeny graphs \cite{jaques2019quantum,tani2009claw}. Notably, given a vertex in an isogeny graph, one can efficiently compute its neighbors. This makes isogeny graphs a natural instantiation of the adjacency list oracle of an abstract graph.  Recent advances have shown that efficient classical algorithms can be used to attack SIDH and SIKE in certain supersingular isogeny graphs, provided additional information is available alongside the starting vertex \cite{castryck2023efficient}. However, the general pathfinding problem underlying isogeny-based cryptosystems remains secure \cite{arpin2024orientations}. Consequently, other isogeny-based cryptosystems, such as CSIDH \cite{castryck2018csidh} continue to be secure against this attack \cite{Galbraith2022}.

In this paper, we show that the pathfinding problem in some graphs admits an exponential separation between the quantum algorithm and the classical algorithm under the adjacency list oracle. Instead of solving the welded tree pathfinding problem and the isogeny graph pathfinding problem directly, we construct a graph $G$ by associating $n$ distinct welded trees with a path of length $n$ as in Figure \ref{fig:expweldedtree}. Given the name of two vertices $x$,$y$ in $G$, the goal of the pathfinding problem is to output the names of vertices of an $x$-$y$ path. Similarly to the welded tree pathfinding problem, there is an exponential number of $x$-$y$ paths in the graph $G$. On the other hand, the shortest $x$-$y$ path is unique in the graph $G$, while there is an exponential number of shortest paths in the welded tree graph. Using the distinctness of the welded trees in the graph $G$, we show that there is an efficient quantum algorithm that solves this pathfinding problem. This quantum algorithm works by finding the edges of the $x$-$y$ shortest path step-by-step. For each step, the polynomial time continuous quantum walk algorithm \cite{childs2003ExpSpeedupQW} for the welded tree problem is used to select one edge of the shortest $x$-$y$ path. Repeat $n$ steps, the quantum algorithm outputs the shortest $x$-$y$ path. In particular, our quantum algorithm uses the continuous quantum walk algorithm \cite{childs2003ExpSpeedupQW} as a subroutine, which is a new class of algorithm outside of the natural quantum algorithms considered in \cite{childs2022quantum}. This suggests that there might be a new way to solve the welded tree pathfinding problem.

Finally, we show that no classical algorithm can solve the path-finding problem in subexponential time. Observe that any paths between the two vertices $s$ and $t$ have to go through a vertex of degree at least $4$, or pass the vertex $p_{n/2}$ as indicated in Figure \ref{fig:expweldedtree}. To establish the classical lower bound, we show that it is classically hard to output the name of the vertex $p_{n/2}$ or vertices that have a degree of at least $4$. The proof uses the result of the classical lower bound of the welded tree problem, which informally states that, given the name of one root, it is classically hard to traverse through the welded tree to find the other root.

To the best of our knowledge, this is the first example that exhibits the exponential speedup of quantum algorithms for the pathfinding problem. Previous quantum algorithms have achieved at most a polynomial speedup for solving the pathfinding problem in both general and specific graphs. For example, quantum algorithms based on amplitude amplification \cite{durr2006quantum} can solve the pathfinding problem in a general graph using $O(N^{3/2})$ queries to the adjacency matrix oracle, where $N$ is the number of vertices in the graph. By restricting the class of graphs and under the same query model, a quantum algorithm \cite{jeffery2023quantum} based on the span program to generate the quantum $s$-$t$ electrical flow slightly improves the query complexity to solve the pathfinding problem.  In the incidence list oracle model, \cite{Seaneletr} uses the HHL algorithm to generate the quantum $s$-$t$ electrical flow and provide a sublinear time ($\Tilde{O}(\sqrt{N})$) quantum algorithm for the pathfinding problem in some graphs while the known classical and quantum algorithms take linear time $\Omega(N)$.
Furthermore, there exist quantum algorithms \cite{reitzner2017finding,hillery2021finding,koch2018finding} that utilize quantum walks to solve the pathfinding problem in specific graph structures such as a chain of star graphs and a regular tree graph, providing a quadratic speedup compared to classical algorithms.


\paragraph{Organization.} In Section \ref{sec:qaweldedtree}, we introduce the main results of the continuous quantum walk for the welded tree problem, which will be a key subroutine of our quantum pathfinding algorithm. In Section \ref{sec:gralg}, we construct the graph $G$ from the welded trees and use it to define the pathfinding problem. Then we present the quantum algorithm that solves the pathfinding problem. In Section \ref{sec:claslower}, we establish the classical lower bound of the pathfinding problem showing that no classical algorithm can solve this problem with a polynomial number of queries. Finally, we give a discussion and provide some open problems in Section \ref{sec:concu}.

\section{Quantum algorithms for the welded tree problem} \label{sec:qaweldedtree}
In this section, we introduce the main results of the continuous quantum walk approach for the welded tree problem, which will be a key subroutine of our quantum algorithm for the pathfinding problem.

It should be noted that several quantum algorithms have been developed to solve the welded tree problem, encompassing techniques such as continuous quantum walks \cite{childs2003ExpSpeedupQW}, multidimensional quantum walks \cite{jeffery2023multidimensional}, discrete quantum walks \cite{li2023recover} and the coupled classical oscillator approach \cite{babbush2023exponential}. However, in this paper, we specifically focus on the original continuous quantum walk approach for the welded tree problem.

A welded tree graph $W$ consists of two balanced binary trees of height $n$ with roots $s$ and $t$ and a random cycle that alternates between the leaves of the two binary trees. The number of vertices in $W$ is $2^{n+2}-2$ and the names of the vertices are randomly assigned from the set $\{0,1\}^{2n}$. To access the neighbors of a particular vertex, we use an adjacency list oracle denoted as $O$ for the graph $G$.  Given a $2n$-bit string $u \in \{0,1\}^{2n}$, the adjacency list oracle $O$ provides the neighboring vertices of $u$, or it returns $\perp$ if $u$ is not a valid vertex name in the graph.

\begin{problem}[The welded tree problem] Given an adjacency list oracle $O$ of the welded tree $W$ and the name of the starting vertex $s \in \{0,1\}^{2n}$, the goal is to output the name of the other root $t$.
\end{problem}

Let $A$ be the adjacency matrix of a graph $G$, $\tau$ be a real number, and $\ket{\psi_0} =\ket{s}$ as the initial input state, the continuous-time quantum walk is defined as

\[ 
\ket{\psi_\tau} = e^{-iA\tau}\ket{\psi_0}.\]


\begin{lemma}[Theorem 3 in \cite{childs2003ExpSpeedupQW}] \label{lem:weldedcqw}
Let $A$ be the adjacency matrix of the welded tree graph $W$. With the adjacency list oracle $O$ of the welded tree $W$, the name of the starting vertex $s\in \{0,1\}^{2n}$ and let $\ket{\psi_0}=\ket{s}$, running the continuous quantum walk $\ket{\psi_\tau} = e^{-iA\tau}\ket{\psi_0}$ for a time $\tau$ chosen uniformly in $[0, n^5]$ and then measuring on the computational basis yields a probability of finding the name of the other root vertex $t$ that is greater than $\Omega(\frac{1}{n})$.
\end{lemma}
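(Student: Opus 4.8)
The plan is to reduce the continuous‑time walk on $W$ to a walk on a line of $2n+2$ sites (the ``column space'' of the welded tree) and then lower bound the time‑averaged transfer probability by a spectral counting argument. First I would partition the $2^{n+2}-2$ vertices of $W$ into columns $C_0,\dots,C_{2n+1}$, where $C_0=\{s\}$, $C_n$ and $C_{n+1}$ are the two leaf sets, $C_{2n+1}=\{t\}$, and $|C_j|=2^{\min(j,\,2n+1-j)}$; put $\ket{c_j}:=|C_j|^{-1/2}\sum_{v\in C_j}\ket v$. Every pair of consecutive columns $C_j,C_{j+1}$ with $j\ne n$ is joined by tree edges, and the weld joins $C_n$ to $C_{n+1}$ by a $2$‑regular bipartite graph, so $\mathcal S:=\Span\{\ket{c_0},\dots,\ket{c_{2n+1}}\}$ is $A$‑invariant and $A|_{\mathcal S}$ is the $(2n+2)\times(2n+2)$ tridiagonal Jacobi matrix $\widehat A$ with zero diagonal and off‑diagonal entries all equal to $\sqrt2$ except $\langle c_{n+1}|\widehat A|c_n\rangle=2$. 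Since $\ket{\psi_0}=\ket s=\ket{c_0}\in\mathcal S$ the evolution never leaves $\mathcal S$, the measurement returns $t$ with probability exactly $p(\tau):=|\langle t|e^{-iA\tau}|s\rangle|^2=|\langle c_{2n+1}|e^{-i\widehat A\tau}|c_0\rangle|^2$, and it remains to show $\frac1{n^5}\int_0^{n^5}p(\tau)\,d\tau=\Omega(1/n)$.

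Next I would exploit the reflection $R:\ket{c_j}\mapsto\ket{c_{2n+1-j}}$, which commutes with $\widehat A$. A tridiagonal matrix with nonzero off‑diagonals has simple spectrum, so each eigenvector $\ket{E_k}$, $k=1,\dots,2n+2$, has a definite $R$‑parity, whence $|\langle c_{2n+1}|E_k\rangle|=|\langle c_0|E_k\rangle|=:\sqrt{a_k}$ with $\sum_k a_k=1$. Therefore the infinite‑time average is
\[
\overline p:=\lim_{T\to\infty}\frac1T\int_0^T p(\tau)\,d\tau=\sum_k\bigl|\langle c_{2n+1}|E_k\rangle\langle E_k|c_0\rangle\bigr|^2=\sum_k a_k^2\ \ge\ \frac1{2n+2},
\]
the last step being Cauchy--Schwarz ($\sum_k a_k^2\ge(\sum_k a_k)^2/(2n+2)$ with only $2n+2$ terms).

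Then I would control the finite‑time error. Writing $c_k:=\langle c_{2n+1}|E_k\rangle\langle E_k|c_0\rangle$ (so $|c_k|=a_k$, $\sum_k|c_k|=1$), expanding $p(\tau)=\sum_{k,\ell}c_k\overline{c_\ell}\,e^{-i(E_k-E_\ell)\tau}$, and using $\bigl|\tfrac1T\int_0^T e^{-ix\tau}d\tau\bigr|\le\min(1,2/|x|T)$,
\[
\left|\frac1T\int_0^T p(\tau)\,d\tau-\overline p\right|\ \le\ \frac2T\sum_{k\ne\ell}\frac{|c_k|\,|c_\ell|}{|E_k-E_\ell|}\ \le\ \frac{2}{\delta\,T},\qquad \delta:=\min_{k\ne\ell}|E_k-E_\ell|.
\]
Taking $T=n^5$, it then suffices to establish a spectral‑gap bound $\delta=\omega(1/n^4)$, after which $\frac1{n^5}\int_0^{n^5}p(\tau)\,d\tau\ge\frac1{2n+2}-o(1/n)=\Omega(1/n)$, which is exactly the expected success probability for $\tau$ drawn uniformly from $[0,n^5]$.

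The main obstacle is this gap bound for $\widehat A$. I would block‑diagonalize $\widehat A$ into its two $R$‑parity sectors; each is the $(n+1)\times(n+1)$ tridiagonal matrix with off‑diagonals $\sqrt2$ and a single modified diagonal entry at the folded (weld) end, namely $+2$ for the even sector and $-2$ for the odd one. All its eigenvalues lie in $[-2\sqrt2,2\sqrt2]$ and have the form $2\sqrt2\cos\theta$, $\theta\in(0,\pi)$, solving $\cot\bigl((n+1)\theta\bigr)=R_\pm(\theta)$ where $R_+(\theta)=\frac{2-\sqrt2\cos\theta}{\sqrt2\sin\theta}$ and $R_-(\theta)=-\frac{2+\sqrt2\cos\theta}{\sqrt2\sin\theta}$ vary slowly compared with the rapidly oscillating $\cot$; a standard oscillation/interlacing count gives, within each sector, consecutive solutions $\Theta(1/n^2)$‑separated near the band edges and $\Theta(1/n)$‑separated in the bulk. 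Since $R_+(\theta)>R_-(\theta)$ for all $\theta\in(0,\pi)$, an even‑sector eigenvalue can never equal an odd‑sector one, and a quantitative form of this strict inequality keeps them $\Omega(1/\mathrm{poly}(n))$ apart; together this gives $\delta=\Omega(1/n^2)$, comfortably more than the $\omega(1/n^4)$ required. The remaining ingredients — the column reduction, the parity identity, the Cauchy--Schwarz step, and the convergence estimate — are routine, so the one place where care is genuinely needed is pinning down the eigenvalue distribution of $\widehat A$, and in particular ruling out anomalously close eigenvalues coming from the two symmetry sectors.
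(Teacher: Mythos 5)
Your outline follows the same route as the proof this lemma is imported from (the paper under review does not reprove it; it cites Theorem~3 of \cite{childs2003ExpSpeedupQW}): reduction to the column space, the parity identity $|\langle c_0|E_k\rangle|=|\langle c_{2n+1}|E_k\rangle|$ from the reflection symmetry and simplicity of the Jacobi spectrum, the Cauchy--Schwarz bound $\overline p\ge 1/(2n+2)$ on the infinite-time average, and the finite-time error $O(1/(\delta T))$. All of those steps are stated correctly, and your reduced matrix and the quantization conditions $\cot((n+1)\theta)=R_\pm(\theta)$ are right. The problem is exactly at the point you yourself flag as the crux, the spectral estimate for $\widehat A$, and there what you assert is partly false and partly not argued.

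First, it is not true that all eigenvalues of the folded $(n+1)\times(n+1)$ matrices lie in $[-2\sqrt2,2\sqrt2]$: because the end defect $\pm2$ exceeds the hopping $\sqrt2$, each parity sector has one eigenvalue outside the band, exponentially close to $+3$ (even sector) and $-3$ (odd sector); already for $n=2$ the top eigenvalue is $\approx 2.90>2\sqrt2$. These weld-localized outliers are harmless since they are isolated, but your parametrization $2\sqrt2\cos\theta$, $\theta\in(0,\pi)$, misses them and your oscillation count is off by one per sector. Second, and more seriously, the claimed $\delta=\Omega(1/n^2)$ is false. Near the band edge the even-sector solution in branch $m$ of the cotangent sits just above the branch point $m\pi/(n+1)$, at $\theta$-distance $\Theta(m/n^2)$ (since $R_+(\theta)\approx\tfrac{2-\sqrt2}{\sqrt2\,\theta}\sim n/m$ there), while the odd-sector solution in branch $m-1$ sits just below that same point, also at distance $\Theta(m/n^2)$; the two straddle $m\pi/(n+1)$, and since $|dE/d\theta|=2\sqrt2\sin\theta\sim m/n$ the resulting energy gap is $\Theta(m^2/n^3)$, i.e.\ the true minimum spacing is $\Theta(1/n^3)$, attained by a cross-parity pair at the band edge. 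The sentence ``a quantitative form of this strict inequality keeps them $\Omega(1/\mathrm{poly}(n))$ apart'' is precisely the estimate that must be proved---it is the content of the eigenvalue-spacing lemma in \cite{childs2003ExpSpeedupQW}, whose bound is of order $n^{-3}$. The lemma itself survives, because $\Theta(1/n^3)=\omega(1/n^4)$ still gives error $2/(\delta n^5)=O(1/n^2)=o(1/n)$, but as written your proposal asserts rather than proves the one non-routine step, and the specific bound asserted there is incorrect.
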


By querying the adjacency list oracle $O$ of a given vertex, one can determine whether it has a degree of $2$ or not. Since only two vertices, $s$ and $t$, have a degree of $2$ in the welded tree $W$, repeating the quantum algorithm multiple times can find the other root $t$ in poly($n$) time. On the other hand, no classical algorithm can solve the welded tree problem in subexponential time by the following \cref{lem:weldedlowerbound}. 
\begin{lemma}[Theorem 9 in \cite{childs2003ExpSpeedupQW}] \label{lem:weldedlowerbound} 
 For the welded tree problem,  
any classical algorithm that makes at most $2^{n/6}$ queries to the oracle finds the ending vertex or a cycle with probability at most $4\cdot 2^{-n/6}$.
\end{lemma}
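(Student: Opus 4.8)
The plan is to apply the standard query-lower-bound strategy for the welded tree (this statement is Theorem~9 of \cite{childs2003ExpSpeedupQW}): convert an arbitrary randomized classical algorithm into a graph-exploration process, reveal the randomness of $W$ lazily, and show that within $2^{n/6}$ queries the explored subgraph is, with probability at least $1-4\cdot 2^{-n/6}$, a subtree of $W$ not containing $t$, so the algorithm has neither discovered a cycle nor reached the ending vertex. The first step is that random names make guessing useless: the $2^{n+2}-2$ valid names form a uniformly random subset of $\{0,1\}^{2n}$, so any query string not already obtained as the name of a neighbor of a previously seen vertex is valid with probability at most $2^{n+2-2n}$, and a union bound over $\le 2^{n/6}$ queries shows that, except with probability $4\cdot 2^{-5n/6}$, the algorithm only ever queries already-discovered names. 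Conditioning on this, the algorithm maintains a connected subgraph $H\subseteq W$ with $s\in H$, each query revealing the $\le 3$ neighbors of a vertex of $H$; after $q$ queries $|V(H)|\le 2q+1$, the algorithm ``finds a cycle'' exactly when $H$ ceases to be a tree, and ``finds the ending vertex'' exactly when $t\in V(H)$ (since $t$ is the only degree-$2$ vertex besides the known $s$).

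Next I would reveal the weld lazily: fix the two abstract height-$n$ binary trees, so that the only remaining randomness is the uniformly random alternating $2^{n+1}$-cycle realizing the weld, revealed one leaf at a time as the exploration reaches it. Then $H$ fails to be a tree only when a freshly revealed weld edge lands on an already-visited leaf, which at the $k$-th weld revelation happens with probability at most $(\#\text{leaves visited})/(2^{n+1}-O(k)) = O(q\cdot 2^{-n})$; a birthday-type union bound over the $\le q$ weld revelations gives $\pr[H\text{ acquires a cycle}] = O(q^2\,2^{-n})$. It remains to bound $\pr[\,t\in V(H)\text{ while }H\text{ is still a tree}\,]$.

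This last estimate is the heart of the matter and the main obstacle. The point is that, because the weld is a \emph{random} alternating cycle and the two trees are isomorphic with every internal vertex of degree $3$, each weld crossing delivers the exploration to a uniformly random leaf of the opposite tree, and at every vertex it cannot tell the edge pointing toward a root from the edges pointing toward the leaves; hence the column index $\mathrm{col}(\cdot)\in\{0,\dots,2n+1\}$ of the frontier (with $\mathrm{col}(s)=0$ and $\mathrm{col}(t)=2n+1$) is stochastically dominated by a walk that, once past the weld on the $t$-side, sees two edges lowering the column for each one raising it, so building a piece of $H$ that reaches column $2n+1$ within $q$ steps means beating a $2{:}1$ drift over $n$ levels in at least one of $\le q$ attempts, which has probability $O(q^2\,2^{-\Omega(n)})$ --- equivalently, conditioned on no cycle the random weld and random names make $H$ uniformly distributed among exponentially many re-embeddings into $W$, only a $2^{-\Omega(n)}$ fraction of which contain $t$, and a union bound over the $\le q$ steps gives the same thing. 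The delicate point is making this direction-obliviousness rigorous against an \emph{adaptive} algorithm --- that no rule for choosing the next probe can correlate with where $t$ sits --- which is exactly the content of the careful coupling carried out in \cite{childs2003ExpSpeedupQW}. Combining the three contributions and using $q\le 2^{n/6}$ bounds the total failure probability by $4\cdot 2^{-n/6}$, and on the complementary event the algorithm neither sees a cycle nor reaches $t$, as claimed.
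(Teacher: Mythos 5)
There is nothing in the paper to compare against line by line: the paper does not prove this lemma at all, but imports it verbatim as Theorem~9 of \cite{childs2003ExpSpeedupQW}, adding only the remark that the original proof also covers the ``finds a cycle'' event even though the original statement does not mention it. Your outline does follow the skeleton of that original proof --- bound the probability of guessing an unseen valid name, reduce to exploring a connected subgraph from $s$, reveal the random weld lazily, bound cycle creation by a birthday-type estimate, and bound the probability of ever reaching $t$ by a direction-obliviousness/drift argument --- and, to your credit, you treat the cycle event explicitly, which is exactly the point of the paper's remark.

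However, as a blind proof the proposal has a genuine gap, and you name it yourself: the entire difficulty of the theorem is showing that an \emph{adaptive} algorithm cannot correlate its probes with the location of $t$, and you discharge this by saying it ``is exactly the content of the careful coupling carried out in \cite{childs2003ExpSpeedupQW}.'' That step is not a technicality; it is the theorem. In the original argument it is handled by replacing the algorithm's transcript with a random embedding $\pi$ of a rooted binary tree $T$ (the same device this paper restates in its Definition of a random embedding) and then proving that $\pi$ is improper or $T$ exits with exponentially small probability --- the randomness of the $1/2$-assignments of non-parent neighbors is what makes adaptivity powerless, and your ``$2{:}1$ drift'' heuristic is a plausible shadow of that lemma but is never made rigorous against an adaptive prober. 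Two smaller points: the three error terms you compute ($4\cdot 2^{-5n/6}$, $O(2^{-2n/3})$, $O(q^2 2^{-\Omega(n)})$) do not by themselves ``combine to $4\cdot 2^{-n/6}$'' in any derived way --- the stated constant comes out of the specific lemmas of \cite{childs2003ExpSpeedupQW}, and you should either reproduce those bounds or simply state that the sum is at most the (loose) bound $4\cdot 2^{-n/6}$; and since this paper only ever uses the lemma as a black box, citing Theorem~9 outright (as the paper does) would have been a legitimate and more honest alternative to a sketch whose crucial step is itself the citation.
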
 

Note that the exact statement of Theorem 9 in \cite{childs2003ExpSpeedupQW} does not mention finding a cycle while the proof in their classical lower bound result holds for this case.

\section{The pathfinding problem and the quantum algorithm} \label{sec:gralg}
In this section, we construct a graph $G$ based on welded trees and provide an efficient quantum algorithm that outputs an $x$-$y$ path. 
The key idea is to construct the graph $G$ associating an $x$-$y$ path $P_n$ of length $n$ with $n$ distinct welded trees. The quantum algorithm finds the $x$-$y$ path step by step. For each step, we use the quantum algorithm for the welded tree problem \cite{childs2003ExpSpeedupQW} to detect and select the edge $(x,u_i)$ that is in the path $P_n$. Then we remove the selected edge and update the starting vertex $x = u_i$. Repeat until $x=y$ and output all selected edges as an $x$-$y$ path.


Without loss of generality, assume that $n$ is an even positive integer. Given $n$ disjoint welded trees $W^i$ with roots $s_i$ and $t_i$, the graph $G$ is constructed as follows:

\begin{enumerate}
    \item For each $i\in [n]$, adding $i+1$ new isolated edges $e_k=(m^i_k,n^i_k), k \in [i+1]$,   then adding $i+1$ new edges between $t_i$ and vertices $m^i_k$  such that the root vertex $t_i$ has degree $i+3$.

    
    \item Given a path graph $P_n$ with $n$ vertices $p_1, \cdots, p_n$, for each $i\in [n]$, add an edge between $p_i$ and $s_{i}$ . 

    \item Adding a random cycle between all vertices $n^i_k$ for each $i\in [n]$ and $k\in [i+1]$.
     Denote the resulting graph as $G$. 
\end{enumerate}

\begin{figure}[H]
\centering
\includegraphics[width=15cm,height=8cm]{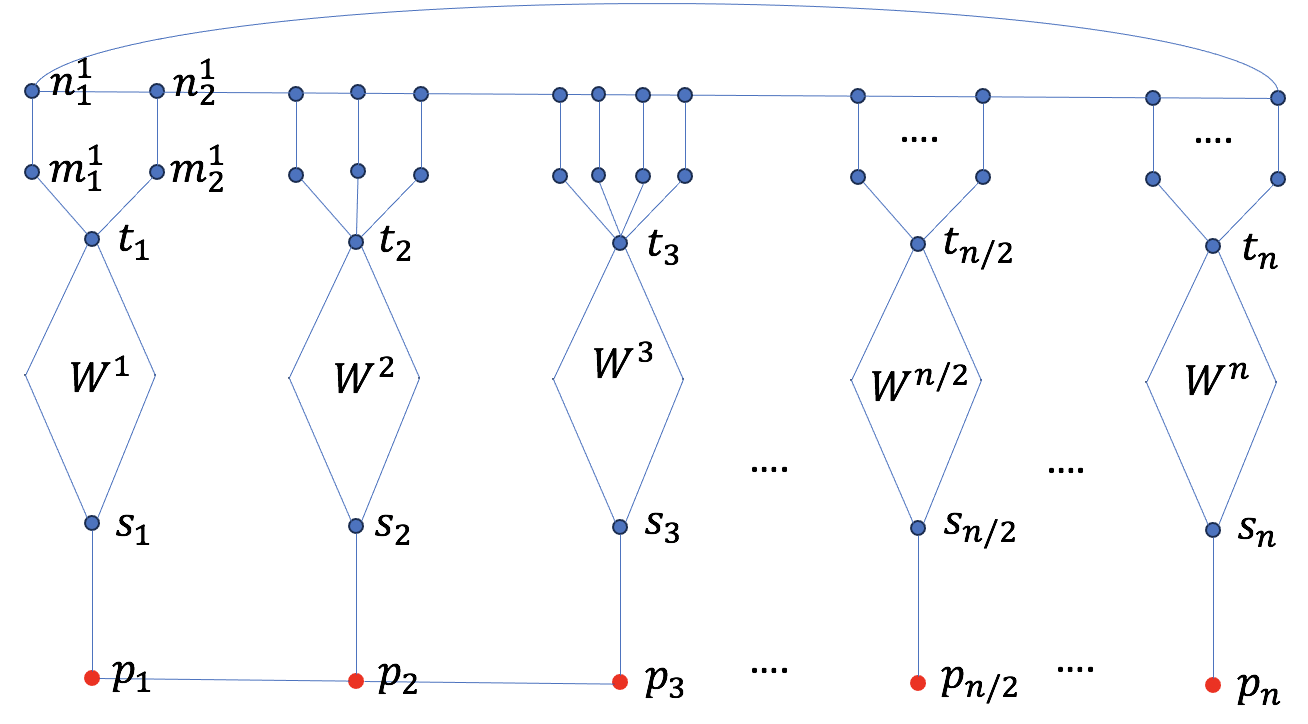}
\caption{The Welded Tree Path Graph $G$}
\label{fig:expweldedtree}
\end{figure}


Let $x=p_1$ be the starting vertex, $y=p_n$ be the ending vertex, and given the names of $s$ and $t$, the goal is to find an $x$-$y$ path in the graph $G$.
 \begin{problem} [Pathfinding problem] \label{prob:pathfind}Given the adjacency list oracle $O$ of the graph $G$ and the name of the starting, ending vertices $x, y\in \{0,1\}^{2n}$, the goal is to output an $x$-$y$ path in $G$.\end{problem}

\begin{algorithm}
\caption{Quantum algorithm for finding an $s$-$t$ path in the graph $G$}
\begin{algorithmic}\label{alg:weldedqwpath}
\REQUIRE Graph $G=(V,E)$, $x, y\in V$ and $i=1$.
\ENSURE  an $x$-$y$ path
\begin{enumerate}
 
  \STATE Given the name of the vertex $x \in \{0,1\}^{2n}$, the adjacency list oracle $O$  returns the names of the two neighbors of $x$, that is $u_1, u_2 \in \{0,1\}^{2n}$.  Without loss of generality, pick one of the two neighbors $u_1$ as the initial state $\ket{u_1}$. 
  
 \STATE \label{ite:algroithm} Let $A'$ be the modified adjacency matrix of the adjacency matrix $A$ of $G$ by removing all the edges adjacent to a degree $2$ vertex. Let $O'$ be $O$ except it returns no edge if one endpoint of the edge is a degree $2$ vertex. Run the continuous quantum walk  $e^{-iA'\tau}\ket{u_1}$ for a uniform random time $\tau \in [0, n^5]$.
 Measure the resulting state in the computational basis and get the name of an outcome vertex $v \in \{0,1\}^{2n}$. Compute the degree of the vertex $v$ by querying the adjacency list oracle $O$ of $G$. 
 
 \STATE Repeat Step ~\ref{ite:algroithm} $n^2$ times or until the degree of the measured vertex equals $i+3$. 
  
  \STATE   If the degree of the measured vertex $v$ equals $i+3$, then collect the edge $(x,u_2)$ as an $x$-$y$ path edge and let $x=u_2$.  Otherwise, collect the edge $(x,u_1)$ as an $x$-$y$ path edge and let $x=u_1$. 
  
 \STATE Let i=i+1 and update the graph $G$ by deleting the selected edge.  Repeat all the above steps until $x=y$, then output all the selected edges as an $x$-$y$ path.
   \end{enumerate}
\end{algorithmic}
  \end{algorithm} 
\begin{theorem} With high probability, Algorithm \ref{alg:weldedqwpath} outputs an $x$-$y$ path in poly($n$) time.  
\end{theorem}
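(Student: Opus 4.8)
The plan is to analyze Algorithm~\ref{alg:weldedqwpath} one outer iteration at a time, showing that each iteration correctly identifies the unique shortest-path edge leaving the current vertex $x = p_i$ in poly($n$) time, and that the error probabilities accumulated over the $n$ iterations remain bounded by a constant.

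First I would describe the structure of the graph $G$ around a vertex $p_i$ on the spine $P_n$. The vertex $p_i$ has exactly two neighbors: $p_{i-1}$ (or $x$ itself if $i=1$) and $p_{i+1}$ along the spine — wait, more carefully, $p_i$ is adjacent to $p_{i-1}$, $p_{i+1}$, and $s_i$. Hmm, but the algorithm treats $x$ as having two neighbors $u_1, u_2$. So I would first clarify that after the degree-$2$ vertices are stripped and edges selected/deleted, the "current" $x$ effectively has two relevant neighbors: one leading "forward" toward $y$ along the spine, and one leading "into" the welded tree $W^i$ via $s_i$. Actually, re-reading: $x$ starts as $p_1$, and the algorithm picks up spine edges; so at step $i$, $x=p_i$ and its two neighbors (after deletions) are $s_i$ and $p_{i+1}$. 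The key combinatorial fact to establish is: the edge $(p_i, s_i)$ leads into $W^i$, whose far root $t_i$ has degree exactly $i+3$ (by construction step~1), and this degree $i+3$ is \emph{distinct} for each $i$ and does not occur elsewhere reachable in the relevant component — so detecting a degree-$(i+3)$ vertex certifies that $u_1$ was the edge into $W^i$, hence the \emph{other} neighbor $u_2 = p_{i+1}$ is the correct shortest-path edge. I would also verify that the shortest $x$-$y$ path is exactly $p_1 p_2 \cdots p_n$ and is unique, since any detour through a welded tree $W^i$ strictly lengthens the path.

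Second, I would invoke Lemma~\ref{lem:weldedcqw}: when the initial state $\ket{u_1}$ sits at a degree-$2$-stripped root of the welded tree $W^i$ (the vertex adjacent to $p_i$ via $s_i$, now a degree-$2$ vertex whose incident edges are removed in $A'$, so the walk effectively starts at the appropriate entry vertex of $W^i$), running $e^{-iA'\tau}\ket{u_1}$ for $\tau$ uniform in $[0,n^5]$ and measuring yields the far root $t_i$ with probability $\Omega(1/n)$. The decoupling here is crucial: because $A'$ removes all edges adjacent to degree-$2$ vertices, the welded tree $W^i$ (together with its $t_i$-attached gadget edges and the random cycle among the $n^i_k$) becomes an isolated component in $A'$, so the walk stays inside it and Lemma~\ref{lem:weldedcqw} applies essentially verbatim (the extra pendant structure at $t_i$ only helps us identify $t_i$ by its degree). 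Then I would argue: if instead $u_1$ were the spine edge $(p_i,p_{i+1})$ — no wait, $p_{i+1}$ also has degree $3$ (neighbors $p_i, p_{i+2}, s_{i+1}$), which is not $i+3$ for $i \ge 1$; so we would never falsely detect degree $i+3$ from that branch within $n^2$ tries. Hence: with $O(n)$ inner repetitions we hit $t_i$ with constant probability, and with $n^2$ repetitions we hit it except with probability $\le (1 - \Omega(1/n))^{n^2} = e^{-\Omega(n)}$; conversely if $u_1$ was \emph{not} the tree edge, we never see degree $i+3$, so the final branching step correctly picks $u_2$ or $u_1$. Combining, each outer iteration is correct except with probability $e^{-\Omega(n)}$, and by a union bound over $n$ iterations the whole algorithm succeeds with probability $1 - n\,e^{-\Omega(n)} = 1 - o(1)$.

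Third, for the running time: each continuous-time quantum walk for time $\tau \le n^5$ on a graph with $\mathrm{poly}(n)$-degree vertices (max degree $O(n)$) and simulated via the adjacency-list oracle $O'$ can be implemented in $\mathrm{poly}(n)$ gates and queries by standard Hamiltonian-simulation-of-sparse-graphs techniques (as in \cite{childs2003ExpSpeedupQW}); we run this $n^2$ times per outer iteration, and there are $n$ outer iterations, for a total of $\mathrm{poly}(n)$ — and each degree check is a single query to $O$. So the algorithm runs in poly($n$) time.

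\medskip

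\emph{The main obstacle} I expect is making rigorous the claim that the continuous quantum walk restricted to $A'$ really does behave, on the component containing $u_1$, like the clean welded-tree walk of Lemma~\ref{lem:weldedcqw} — i.e., that attaching the degree-raising gadget edges at $t_i$ and the global random cycle among the $n^i_k$ vertices does not destroy the $\Omega(1/n)$ success probability. One must check that $t_i$ still has degree $2$ \emph{within the tree part} so that the column-space / quotient-walk argument of \cite{childs2003ExpSpeedupQW} still collapses the dynamics to an effective line — but $t_i$'s true degree is $i+3$, so the gadget edges are \emph{not} adjacent to a degree-$2$ vertex and hence survive in $A'$, meaning the component of $u_1$ in $A'$ is strictly larger than a bare welded tree. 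The fix is to observe that the gadget edges $e_k = (m^i_k, n^i_k)$: the vertices $m^i_k$ have degree $2$ (adjacent to $t_i$ and to $n^i_k$), so in $A'$ \emph{all} edges incident to $m^i_k$ — including $(t_i, m^i_k)$ — are deleted; thus in $A'$ the vertex $t_i$ \emph{is} restored to degree $2$ within its component, the $m^i_k, n^i_k$ and the random cycle split off entirely, and the component of $u_1$ is exactly the welded tree $W^i$. This is precisely why the construction uses a two-edge pendant gadget rather than a single edge. Once this is pinned down, Lemma~\ref{lem:weldedcqw} applies directly and the rest is routine bookkeeping.
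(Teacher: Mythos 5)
Your proposal is correct and follows essentially the same route as the paper's proof: per-iteration application of Lemma~\ref{lem:weldedcqw} with $n^2$ repetitions to boost $\Omega(1/n)$ to $1-e^{-\Omega(n)}$, the observation that $t_i$ is the unique degree-$(i+3)$ vertex and is unreachable under $A'$ when $u_1$ is the spine neighbor, poly($n$) Hamiltonian simulation via the oracle, and a union bound over the $n$ iterations. In fact, your closing analysis of why the pendant gadgets and random cycle split off in $A'$ (because the $m^i_k$ have degree $2$), so that the component of $s_i$ is exactly a clean welded tree, spells out a detail the paper leaves implicit.
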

\begin{proof}
 For the $i$-th iteration, the key subroutine is to implement the continuous quantum walk $e^{-iA'\tau}$ for $\tau \in [0,n^5]$ in Step 2.  The adjacency list oracle $O'$ can be performed by removing all edges adjacent to a degree $2$ vertex from the adjacency list oracle $O$. Then use the adjacency list oracle $O'$ to construct a block encoding of the modified adjacency matrix $A'$. With this block encoding, we can perform the Hamiltonian simulation  $e^{-iA'\tau} \ket{u_1}$ in poly($n$) time by implementing polynomials of $A'$ \cite[Theorem 58]{gilyen2018quantum}. 

Recall that the only vertex that has degree $i+3$ is the root $t_i$ of the welded tree $W^i$ in the graph $G$. If $u_1$ is the same as the root vertex $s_i$ of the welded tree $W^i$, then by Lemma \ref{lem:weldedcqw}, the success probability of measuring a vertex of degree $i+3$ is $\Omega(1/n)$. The probability of success is $1 - \exp(-\Omega(n))$ by repeating Step 2 $n^2$ times. Otherwise, the probability of measuring a vertex of degree $i+3$ is $0$. This is true because the initial starting vertex $\ket{u_1}$ is disconnected from the welded tree $W^1, \cdots, W^i$ and connected to the welded trees $W^{i+1},\cdots W^{n}$ in the modified graph associated with the adjacency list oracle $O'$. Therefore, for each iteration, Algorithm \ref{alg:weldedqwpath} outputs an edge of the $x$-$y$ path with probability $1 - \exp(-\Omega(n))$ in poly($n$) time.

The length of the $x$-$y$ path outputted by Algorithm \ref{alg:weldedqwpath} is $n$, the same as the total number of iterations. Therefore, with high probability, Algorithm \ref{alg:weldedqwpath} outputs an $x$-$y$ path in poly($n$) time. 
\end{proof}

\begin{remark}
  The random cycle at the top of the graph $G$ is not strictly necessary for showing the exponential quantum-classical separation for the pathfinding problem. Instead, it serves to illustrate the exponential separation persists when there are multiple $x$-$y$ paths, highlighting the robustness of the quantum advantage in more complex graph structures.
\end{remark}

\section{Classical Lower bound}\label{sec:claslower}

In this section, we prove that no classical randomized algorithm $\mathcal{R}$ can solve the pathfinding problem \cref{prob:pathfind} in the welded tree path graph $G=(V,E)$ within subexponential time. Our approach is inspired by the classical lower bound established for the welded (glued) tree graph in \cite{childs2003ExpSpeedupQW}, which demonstrates an exponential quantum-classical separation for finding a marked vertex. We adapt their techniques to address the classical lower bound for the pathfinding problem.

Intuitively, the pathfinding problem can be seen as a generalization of the vertex-finding problem, as it requires identifying a sequence of vertices that form an $x$-$y$ path. To establish a classical lower bound for the pathfinding problem, we demonstrate that no classical algorithm can efficiently identify certain critical vertices along 
$x$-$y$ paths of the constructed welded tree path graph $G=(V,E)$. By the construction of the graph $G$, we define the set of critical vertices as $p_{n/2}$ and the vertices with degrees of at least $4$, specifically 
$\{t_1,\ldots,t_n\}$.

First, by assigning a random $2n$-bits string to each vertex in the graph $G$, any classical algorithm to solve the pathfinding problem with at most $2^{n/6}$ queries of $O$ must traverse connected subgraphs with starting vertices $x$ or $y$. The reason is as follows: the total number of vertices in the graph $G$ is $n2^{n+2}+\frac{n(n+1)}{2}$ while the total number of potential vertex names is $2^{2n}$. If $\mathcal{R}$ makes at most $2^{n/6}$ queries, the probability of querying the name of a vertex that has not been previously returned by the oracle $O$ is at most $$2^{n/6} (n2^{n+2}+\frac{n(n+1)}{2})/2^{2n} = O(2^{-n/6}).$$  Thus, it is unlikely that $\mathcal{R}$ can guess a name of vertices of the graph $G$ that is not returned by the oracle $O$ when making at most $2^{n/6}$ queries. Without loss of generality, in the rest of the section, we restrict $\mathcal{R}$ to traverse two connected subgraphs starting with vertices $x$ and $y$, respectively. 

Under these conditions, solving the pathfinding problem with at most $2^{n/6}$ queries of the adjacency list oracle $O$ is equivalent to playing the following game.

\vspace{1mm}
\textbf{Game A} 
The oracle $O$ assigns each vertex of a randomly chosen graph $G$ a distinct $2n$-bit string as its name. The starting vertex $x$ has the name $0^{2n}$ and the ending vertex $y$ has the name $1^{2n}$. At each step, $\mathcal{R}$ sends a $2n$-bit string to $O$, and the oracle $O$ returns the names of the neighbors of that vertex if the given vertex name is valid. $\mathcal{R}$ wins if it outputs the names of the vertices forming an $x$-$y$ path.

To bound on the success probability of $\mathcal{R}$ winning Game A, we consider a simpler Game B, which relaxes the winning conditions. In Game B, $\mathcal{R}$ wins by outputting the name of any critical vertex, such as  $\{t_1,t_2,\ldots,t_n\}$ or the vertex $p_{n/2}$ or a cycle while winning Game A needs to output names of an $x$-$y$ paths.
\vspace{1mm}

\textbf{Game B} Let Game B be the same as Game A, except that $\mathcal{R}$ wins if it outputs the name of any vertex of degree at least $4$, i.e., $\{t_1,t_2,\ldots,t_n\}$ or the name of the vertex $p_{n/2}$, or a cycle among the vertices visited by $\mathcal{R}$.

 To prove the classical lower bound for the pathfinding problem, it suffices to show that $\mathcal{R}$ cannot win this easier Game B in subexponential time. Following \cite{childs2003ExpSpeedupQW}, the condition that allows $\mathcal{R}$ to win  by finding a cycle in Game B simplifies the analysis. One key concept in \cite{childs2003ExpSpeedupQW} is the random embedding of a rooted binary tree. To analyze the success probability of $\mathcal{R}$ in Game B, we restate the defintion of a random embedding of a rooted binary tree as follows:
 
 


\begin{definition}[Random embedding of a rooted binary tree] Given the name of the starting vertex $x$, the random embedding of a rooted binary tree $T$ into the graph $G$ is defined as a function $\pi$ from the vertices of $T$ to the vertices of $G$ such that \begin{enumerate}
    \item $\pi(\textit{ROOT}) = x$.
    \item Let $i$ and $j$ be the two neighbors of $ROOT$ in $T$ and let $u$ and $v$ be the neighbors of $x$ in $G$. With probability $1/2$ set $\pi(i)= u$ and $\pi(j)=v$, and with probability $1/2$ set $\pi(i)= v$ and $\pi(j)=u$. 
\item For any non-leaf vertex $i$ in $T$, let $j$ and $k$ denote the children of vertex $i$, and let $\ell$ denote the parent of vertex $i$.    Let $u$ and $v$ be the two neighbors of $\pi (i)$ in $G$ other than $\pi (\ell)$. With probability $1/2$ set $\pi(i)= u$ and $\pi(j)=v$, and with probability $1/2$ set $\pi(i)= v$ and $\pi(j)=u$.
\end{enumerate}
  We say that an embedding $\pi$ is proper if it is injective, that is for any $i,j\in T$, we have $\pi(i)\neq \pi(j)$ if $i\neq j$.  We say that $T$ exits under $\pi$ if, for any $i\in T$, the degree of $\pi(i)$ in $G$ is at least $4$ or $\pi(i)$ is $p_{n/2}$ or $y=p_{n}$.   
\end{definition}



\begin{theorem}
\label{lem:clascilowGameC} 
 If $\mathcal{R}$ uses $2^{n/6}$ queries to the oracle $O$, then its probability of finding an $s$-$t$ path is at most $2(n+1)\cdot 4\cdot 2^{-n/6}$. 
\end{theorem}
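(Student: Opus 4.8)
The plan is to reduce the pathfinding problem to Game B via the two reductions already set up in the text, and then bound the probability of winning Game B by analyzing the random embedding of a rooted binary tree, following the structure of the classical lower bound in \cite{childs2003ExpSpeedupQW}. First I would record the two preliminary reductions: (i) with probability $1-O(2^{-n/6})$ any algorithm $\mathcal{R}$ making at most $2^{n/6}$ queries never guesses a valid vertex name it has not already seen, so $\mathcal{R}$ effectively explores two connected subgraphs rooted at $x$ and $y$; and (ii) any $x$-$y$ path in $G$ must pass through $p_{n/2}$ or through some vertex $t_i$ of degree at least $4$, so winning Game A implies winning Game B. Hence it suffices to upper bound the probability that $\mathcal{R}$ wins Game B.

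Next I would set up the coupling between $\mathcal{R}$'s adaptive exploration and the random embedding $\pi$ of an infinite (or depth-$2^{n/6}$) rooted binary tree $T$ into $G$, separately for the component of $x$ and the component of $y$. The point of the tree embedding is that, as long as $\mathcal{R}$ has not yet found a cycle or a degree-$\geq 4$ vertex or $p_{n/2}$, the subgraph it has seen is a tree whose vertices all have degree $2$ or $3$ in the relevant part of $G$, and the names returned by the oracle look like fresh uniformly random strings; thus $\mathcal{R}$'s view is reproduced by revealing a portion of $\pi$. So the probability $\mathcal{R}$ wins Game B in $q = 2^{n/6}$ queries is at most the probability that a random embedding of a binary tree with $q$ vertices either fails to be proper (i.e. hits a collision, which corresponds to finding a cycle — but that is a win, so really the event to track is "exits") or exits, i.e. reaches a degree-$\geq 4$ vertex or $p_{n/2}$ or $p_n$. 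This is exactly where \cref{lem:weldedlowerbound} is invoked: within a single welded tree $W^i$, reaching the far root $t_i$ (the only way to see degree $\geq 4$ inside $W^i$, other than crossing into the path $P_n$) is precisely the welded tree problem, and a random embedding of a binary tree with $\leq 2^{n/6}$ vertices hits the far root with probability at most $4\cdot 2^{-n/6}$. The path $P_n$ contributes the other route to a critical vertex: from $x=p_1$ one must walk $n/2$ steps along $P_n$ to reach $p_{n/2}$, and each $s_i$ hanging off $P_n$ is itself the near root of a welded tree $W^i$ that must be traversed to reach $t_i$; the same welded-tree bound applies to each such excursion.

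I would then assemble the union bound. There are at most $n$ welded trees whose far root could be reached, plus the path $P_n$ itself; the starting points $x$ and $y$ give the factor $2$; and each welded-tree traversal contributes $4\cdot 2^{-n/6}$ by \cref{lem:weldedlowerbound}, while reaching $p_{n/2}$ along $P_n$ in $\leq 2^{n/6}$ steps is impossible for large $n$ (or is absorbed into one of the welded-tree terms since leaving $P_n$ and coming back requires traversing a $W^i$). Collecting these gives the bound $2(n+1)\cdot 4\cdot 2^{-n/6}$ claimed in the theorem, after also adding the $O(2^{-n/6})$ from the name-guessing reduction, which is dominated.

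The main obstacle I expect is making the coupling in the second paragraph fully rigorous: one must argue carefully that, conditioned on not yet having exited, the oracle's responses in $\mathcal{R}$'s exploration are distributed exactly as a partial random embedding, including that the random welding cycle and the random name assignment make previously-unseen neighbors behave like independent fresh coins, and that an adaptive (possibly non-tree-respecting) query strategy cannot do better than tree exploration — this last point is where the "finding a cycle counts as a win" relaxation does the real work, since it lets us assume the explored subgraph is always a forest. The per-tree bound itself is a black-box application of \cref{lem:weldedlowerbound}, so the novelty is entirely in checking that the composite graph $G$ decomposes, from $\mathcal{R}$'s local view, into independent copies of the welded tree instance plus a short deterministic path.
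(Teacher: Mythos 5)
Your reduction to Game B, the random-embedding coupling, and the per-tree use of \cref{lem:weldedlowerbound} with a union bound over the $n$ welded trees and the two starting components all match the paper's argument. The genuine gap is in how you dispose of the vertex $p_{n/2}$. You claim that reaching $p_{n/2}$ along $P_n$ within $2^{n/6}$ queries is ``impossible for large $n$,'' but it takes only $n/2 \ll 2^{n/6}$ steps to walk from $p_1$ to $p_{n/2}$, so there is no step-count obstruction; and your fallback, that reaching $p_{n/2}$ is ``absorbed into one of the welded-tree terms since leaving $P_n$ and coming back requires traversing a $W^i$,'' is also false, since an algorithm can in principle reach $p_{n/2}$ without ever leaving $P_n$. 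The actual obstruction is informational: every internal path vertex $p_i$ and every internal welded-tree vertex has degree $3$ and carries a random name, so the algorithm cannot tell which forward neighbor continues along $P_n$ and which enters $W^i$. This is exactly where the random embedding does work beyond the black-box welded-tree bound: in the paper's proof, each step that happens to follow $P_n$ corresponds to a probability-$1/2$ assignment in the embedding, so a fixed root-to-leaf branch of the query tree $T$ lands on $p_{n/2}$ with probability at most $2^{-n/2}$, and a union bound over the at most $2^{n/6}\cdot 2^{n/6}$ branches/tries gives
\[
\Pr[\mathcal{R} \text{ finds } p_{n/2}] \;\le\; 2^{n/6}\cdot 2^{n/6}\cdot 2^{-n/2} \;=\; 2^{-n/6},
\]
which is the ``$+1$'' term in the factor $2(n+1)$. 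Without some argument of this kind your union bound has no justified contribution for $p_{n/2}$, and the stated constant cannot be assembled; the rest of your outline (the $n\cdot 4\cdot 2^{-n/6}$ welded-tree term, the factor $2$ from symmetry in $x$ and $y$, and the dominated $O(2^{-n/6})$ name-guessing term) is in line with the paper.
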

\begin{proof}  To obtain an upper bound of the probability of $\mathcal{R}$ for solving the pathfinding problem, it suffices to show that the probability of $\mathcal{R}$ finding the certain set of vertices, that is  $p_{n/2}$ and vertices that have degrees at least $4$,  is at most $2(n+1)\cdot 4\cdot 2^{-n/6}$.
\begin{enumerate}
    \item

Given the name of the starting vertex $s$, the upper bound of the probability of $\mathcal{R}$ winning Game $B$ can be expressed as the probability that the embedding $\pi$ is improper or $T$ exits $G$ under $\pi$ using the result of \cite[Equation 72, Lemma 7]{childs2003ExpSpeedupQW}. This comes from the fact that there exists an algorithm $\mathcal{R}'$ to generate a random rooted binary tree by simulating any classical algorithm $\mathcal{R}$ for Game B.

Let $T$ be a random rooted binary tree with at most $2^{n/6}$ vertices and $\pi(T) \subseteq V$ be the image in the graph $G$ under the random embedding $\pi$. We have the following result:

\begin{enumerate}
    \item First, it is unlikely that $\pi(T)$ contains a vertex with degree at least $4$, i.e., $\{t_1,t_2,\ldots,t_n\}$ or a cycle.  For the vertices $\{t_1,t_2,\ldots,t_n\}$ and the cycles in $G$ can only be located by entering a welded tree graph. This is true because, as indicated in Figure \ref{fig:expweldedtree}, the vertices that have a degree at least $4$ are the roots $t_i$, and the path $P_n$ with the $n$ welded trees form a tree structure. Lemma~\ref{lem:weldedlowerbound} shows that any classical algorithm that makes at most $2^{n/6}$ queries to the oracle $O$ can find a root $t_i$ or a cycle in a single welded tree graph $W^i$ with probability at most $ 4\cdot 2^{-n/6}$. Since there are $n$ welded trees and $n$ vertices of degree at least $4$ in the welded tree path graph $G$, the union bound \footnote{The union bound states that the probability of at least one event occurring is at most the sum of the probability of each event: $\Pr[\text{at least one success}] \leq \sum_{i=1}^{n} \Pr[\text{success in tree $i$}]$.} implies that the 
    \[
    \Pr[\mathcal{R} \text{ finds a cycle or $t_i, 1 \leq i\leq n$}] \leq n\cdot 4 \cdot 2^{-n/6}.
    \]

 \item The probability of $\pi(T)$ contains the vertex $p_{n/2}$ is exponentially small.
 Consider a path in $T$ from the root to a leaf. To reach the vertex $p_{n/2}$, $\pi$ must follow the path $P_n$ $\frac{n}{2}$ times, which has probability $2^{-n/2}$. Since there are at most $2^{n/6}$ tries on each path of $T$ and there are at most $2^{n/6}$ paths. The probability of finding the name of the vertex $p_{n/2}$ is at most $2^{-n/6}$, that is,
 \[
 \Pr[\mathcal{R} \text{ finds } p_{n/2}] \leq 2^{n/6} \cdot 2^{n/6} \cdot 2^{-n/2}=2^{-n/6}.
 \]

 \end{enumerate}
 Therefore, by the union bound, given the name of the starting vertex $s$, the probability of $\mathcal{R}$ finding the vertex $p_{n/2}$, a cycle, or a vertex of degree at least $4$ is at most $ (n+1) \cdot 4\cdot 2^{-n/6}$.
\item By symmetry structure of the welded tree path graph,, the same bound also holds when the only given name is the vertex $y$. 
\end{enumerate}
Hence, given the names of $x$ and $y$ and by union bound, the probability of $\mathcal{R}$ uses  $2^{n/6}$ queries to the oracle $O$ to find a cycle, the vertex $p_{n/2}$, or a vertex with a degree at least $4$ is at most $2(n+1) \cdot 4\cdot 2^{-n/6}$. That is, $\mathcal{R}$ cannot win Game A in subexponential time.
\end{proof}

\section{Discussion} \label{sec:concu}
In this paper, we show that the pathfinding problem in the graph $G$ admits an exponential separation between quantum and classical algorithms under the adjacency list oracle. The key idea is to encode the path information into $n$ distinct welded trees, in which quantum algorithms can extract the path information efficiently by distinguishing the $n$ distinct welded trees but no classical algorithm can do this in polynomial time. In fact, there are more types of graphs to show an exponential separation of the pathfinding problem as long as they satisfy this property, for example, replacing the random cycle at the top of the graph $G$ with an arbitrary graph. Also,  using the continuous quantum walk approach, \cite{balasubramanian2023exponential} exhibits a superpolynomial quantum-classical separation to find a vertex in random hierarchy graphs. Therefore, we can also replace the welded trees with random hierarchy graphs to achieve quantum-classical separation for the pathfinding problem.

Moving forward, we provide several open problems related to the quantum advantages of pathfinding problems in more types of graphs that do not have the properties used in this paper.

\begin{itemize}
    
    \item The welded tree path graph $G$ we constructed in our paper is not regular, which is the structure that our quantum algorithm exploits to run efficiently. However, isogeny graphs, such as the Cayley graph and supersingular graphs, are examples of regular graphs. It would be intriguing to explore whether there exist regular graphs that exhibit exponential separations between quantum and classical algorithms.

   \item In addition to achieving exponential speedups, exploring and characterizing polynomial speedups of quantum algorithms for the pathfinding problem in various types of graphs is a promising and exciting future direction.
   

\end{itemize}

\section*{Acknowledgements}
J.L.\ would like to thank Sean Hallgren, Andrew Childs, Yi-Kai Liu, Daochen Wang, Sebastian Zur and Guanzhong Li for their valuable feedback and conversations and Mingming Chen for assistance in drawing the figure. Part of this work was done while the author visited Quantum Information and Computer Science (QuICS) and the Simons Institute for the Theory of Computing.  J.L. acknowledges funding from the National Science Foundation awards CCF-1618287, CNS-1617802, and CCF-1617710, and a Vannevar Bush Faculty Fellowship from the US Department of Defense.

\bibliographystyle{unsrt}

\bibliography{pathfinding2,Texport,extra}

\end{document}